\def\BibTeX{{\rm B\kern-.05em{\sc i\kern-.025em b}\kern-.08em
    T\kern-.1667em\lower.7ex\hbox{E}\kern-.125emX}}
\newtheorem{theorem}{Theorem}
\newtheorem{prop}{Proposition}
\newtheorem{defi}{Definition}
\newif\ifuseboldmathops
\newif\ifuseittextabbrevs
	\newcommand{\Expect}{\mathop{\bf E{}}\nolimits}
	\newcommand{\Expect}{\mathop{\mathbb{E}{}}\nolimits}
\newcommand{\calF}{\mathcal{F}}
\acrodef{mdp}[MDP]{Markov decision process}
\acrodef{dfa}[DFA]{deterministic finite-state automaton}
\acrodef{ltl}[LTL]{linear temporal logic}
\acrodef{ltlf}[LTL$(\calF)$]{quantitative linear temporal logic}
\acrodef{ag}[AG]{Assume-Guarantee}
\acrodef{ssp}[SSP]{Stochastic Shortest Path}
\acrodef{mcmc}[mcmc]{Monte Carlo Markov chain}
\theoremstyle{definition}
\acrodef{ltl}[LTL]{linear temporal logic formula}
\acrodef{mdp}[MDP]{Markov decision process}
\acrodef{smdp}[Semi-MDP]{Semi-Markov decision process}
\acrodef{scltl}[scLTL]{syntactically co-safe LTL}
\newcommand{\defeq}{\mathrel{\mathop:}=}
\title{\LARGE \bf
Stable and Efficient Shapley Value-Based Reward Reallocation for Multi-Agent Reinforcement Learning of Autonomous Vehicles
}
\author{Songyang Han \and He Wang  \and Sanbao Su \and Yuanyuan Shi \and Fei Miao
\thanks{This work was supported by NSF 1849246, NSF 1952096, and NSF 2047354 grants. S.~Han, S.~Su, F.~Miao are with the Department of Computer Science and Engineering, University of Connecticut, Storrs Mansfield, CT, USA.  Email: \{songyang.han, sanbao.su, fei.miao\}@uconn.edu. H.~Wang is with the School of Information Science and Technology, ShanghaiTech University, Shanghai, China. Email: \{wanghe\}@shanghaitech.edu.cn. Y.~Shi is with the Electrical and Computer Engineering Department, University of California, San Diego, La Jolla, CA, USA. Email: \{yyshi\}@eng.ucsd.edu.}}
\begin{document}

\maketitle
\thispagestyle{empty} 
\pagestyle{empty}

\begin{abstract}
With the development of sensing and communication technologies in networked cyber-physical systems (CPSs), multi-agent reinforcement learning (MARL)-based methodologies are integrated into the control process of physical systems and demonstrate prominent performance in a wide array of CPS domains, such as connected autonomous vehicles (CAVs). However, it remains challenging to mathematically characterize the improvement of the performance of CAVs with communication and cooperation capability. When each individual autonomous vehicle is originally self-interest, we can not assume that all agents would cooperate naturally during the training process. In this work, we propose to reallocate the system's total reward efficiently to motivate stable cooperation among autonomous vehicles. We formally define and quantify how to reallocate the system's total reward to each agent under the proposed transferable utility game, such that communication-based cooperation among multi-agents increases the system's total reward. We prove that Shapley value-based reward reallocation of MARL locates in the core if the transferable utility game is a convex game. Hence, the cooperation is stable and efficient and the agents should stay in the coalition or the cooperating group. We then propose a cooperative policy learning algorithm with Shapley value reward reallocation. In experiments, compared with several literature algorithms, we show the improvement of the mean episode system reward of CAV systems using our proposed algorithm.
\end{abstract}

\section{introduction}
\label{sec:intro}
The rapid evolution of ubiquitous sensing, communication, and computation technologies has contributed to the revolution of cyber-physical systems (CPSs). Increasingly, multi-agent reinforcement learning (MARL)-based methodologies are integrated into the control process of physical systems and demonstrate prominent performance in a wide array of CPS domains. Connected and autonomous vehicles (CAVs) are one type of networked CPSs and multi-agent systems, with the development of vehicle-to-everything (V2X) communication technologies. The U.S. Department of Transportation (DOT) has estimated that DSRC (dedicated short-range communications) based V2V communication can address up to 82\% of all crashes in the U.S. and potentially save thousands of lives and billions of dollars~\cite{DSRC_standard}. Information sharing of basic safety messages (BSMs) benefits CAVs coordination and control approaches in scenarios such as cross intersections or lane-merging~\cite{Coordinate_CAV, CV_intersection, ort2018autonomous}, platoon and adaptive cruise control (ACC)~\cite{cacc_platoon, savefuel_Karl, V2Vbenefit_platoon}. 

However, it remains challenging to formally characterize the improvement of learning-based decision-making for CAV systems with V2X connections and to reallocate the system's total reward efficiently to motivate stable cooperation of individual autonomous vehicles. Existing learning-based planning or control approaches do not utilize communication or potential shared information for autonomous vehicles yet~\cite{waymo_rss19,RobustEndtoEnd_icra20,PredictPolicyLearn_LeCun,UCBFlow_iccps19}. How V2X communication benefits MARL, improves the system's total reward, and motivates cooperation has not been formally defined or quantified.

We propose to formally define and quantify the value of communication-based cooperation to MARL based on Shapley value~\cite{Shapley, ShapleyQ_aaai20}, and use it to reallocate the system's total reward to the individual agent to motivate the cooperation. It will build a research foundation to formally quantify the value of information sharing and motivate cooperation for the CAV research society, and the results can be leveraged to other networked CPSs to better understand the benefits of cooperation. Our proposed approach includes the following three major novelties and contributions.
\begin{itemize}
    \item We formally define and quantify how to efficiently reallocate the system's total reward to the individual agent to motivate information sharing and stable cooperation among multi-agents.
    \item We define a transferable utility game formulation to study the reward reallocation problem, and prove that Shapley value-based reward reallocation is efficient in Theorem~\ref{theorem:efficient}. We prove that the reward reallocation scheme is stable if the transferable utility game is a convex game in Theorem~\ref{theorem:core}. Hence, each individual agent should cooperate to receive more rewards.
    \item We propose a cooperative policy learning with Shapley value reward allocation algorithm. In experiments, we show the improvement of the system's total reward, velocity, and comfort of CAV systems using the Shapley value of the grand coalition (all agents cooperate).
\end{itemize}

\section{Related Work}
\label{sec: relate}
Cooperative games attract increasing research interests in MARL~\cite{zhang2019multi}. Existing works have been investigated to encourage every agent to work collaboratively by assigning rewards appropriately, such as a value decomposition network~\cite{sunehag2018value, rashid_nips20}, subtracting a counterfactual baseline~\cite{foerster2017counterfactual}, or an implicit method~\cite{NEURIPS2020_8977ecbb}. Multi-agent deep deterministic policy gradient (MADDPG) applies a centralized $Q$-function to address the problem caused by the non-stationary environment~\cite{lowe2017multi}. Its scalability can be improved by adding an attention mechanism to the centralized critic~\cite{iqbal2019actor}. Distributed execution with communication among agents with centralized critics methods and online learning with communicative actions~\cite{learnComm_nips16, learnComm_iclr19,communication_POMDP, Comm_MAplan, learnComm_MARL, attentionComm_nips18} have shown performance improvement over non-communicative agents. In all these works, agents are assumed to cooperate during the training process, however, whether agents will cooperate or whether the cooperative coalition is \textbf{stable} has not been answered yet. 

Another related line of work is coalition formation games, where selfish and rational agents choose to participate in coalitions to maximize their own utility. There are two classes of coalition formation games: static and dynamic formation processes. 
In the former class, several works employ bargaining methods \cite{okada96,madoc93} to achieve the agreement 
and no agents would deviate from the agreement once it begins. Besides, some works rely on concepts of cooperative game theory to acquire stable coalitions and fair payoff distribution methods, such as Shapley value \cite{vakilinia2019fair}, core \cite{sandholm1999distributed,vakilinia2019fair} and kernel \cite{taase04}. 
However, these methods do not apply to our setting because we consider a dynamic process.
Methods that be used for analyzing dynamic coalition formation include Bayesian RL and MARL~\cite{chalkiadakis04,Taywade21}. But these works all use MARL as a tool to analyze the coalition formation game, while we consider the converse - we leverage analytical tools in cooperative game and coalition formation game for reward reallocation design in MARL.

Shapley value~\cite{Shapley} is one classical method to divide the total payoff of all players in a cooperative game so that each player receives his or her fair payment. It uniquely provides an equitable assignment of values to agents and is also one metric to measure the importance and marginal contribution of each player to the system. Recent literature on learning has developed a principled framework of ``data Shapley"~\cite{DataShapley_ml19} to address data valuation and ``feature Shapley"~\cite{ShapleyFeature_nips17} to measure the importance of features to supervised learning algorithms. 
Computing Shapley is still challenging and requires computing all the possible marginal contributions, which is exponentially large in the training process. Therefore, approximation methods to estimate the Shapley value of agents, data, and features have been designed and are presented in the game theory and machine learning literature~\cite{ShapleyQ_aaai20,computeShapley_13, DataShapley_ml19, ShapleyFeature_nips17}. However, in this work, we leverage the exploration and exploitation power of reinforcement learning to reduce the computational complexity significantly by avoiding traversing all possible marginal contributions.


\section{Preliminary}
\label{sec:preliminary}

\subsection{Convex Game}
\label{sec:convex_game}
In cooperative games, a Transferable Utility game (\textbf{TU game}) with $n$ agents can be represented by a pair $(\mathcal{N},v)$, where $\mathcal{N} = \{1,\ldots,n\}$ is a  set of agents and $v:2^{n}\to \mathbb{R}$ is the real-valued characteristic function~\cite{chalkiadakis2011computational}. More specifically, $\mathcal{N}$ is the grand coalition in which all the $n$ agents cooperate together. For any coalition $\mathcal{C}\subseteq \mathcal{N}$, $v(\mathcal{C})$ measures its value. A TU game $(N,v)$ is a \textbf{convex game} if for any $\mathcal{C,D}\subseteq \mathcal{N}$, $v(\mathcal{C}\cup \mathcal{D})\ge v(\mathcal{C})+v(\mathcal{D})-v(\mathcal{C}\cap \mathcal{D})$. An \textbf{outcome} of any TU game consists of two parts, a \textbf{coalition structure} $\mathcal{CS}$ and a \textbf{payoff vector} $x\in\mathbb{R}^n$, denoted as a pair $(\mathcal{CS},x)$. The coalition structure $\mathcal{CS} = \{\mathcal{C}^1,\ldots,\mathcal{C}^k\}$ over $\mathcal{N}$ is a possible partition of $\mathcal{N}$ and the corresponding payoff vector $x = (x^1,\ldots,x^n)\in\mathbb{R}^n$ satisfies: $(1)$ $x^i\ge 0 $ for each $i\in\mathcal{N}$ and $(2)$ $x(\mathcal{C}^j)\le v(\mathcal{C}^j)$, where $x(\mathcal{C}^j) = \sum_{i\in\mathcal{C}^j} x^i$, for any $j\in\{1,\ldots,k\}$. 

\subsection{Solution Concepts}
There are several solution concepts that identify sets of outcomes with appealing properties, such as stability and efficiency. In this paper, we investigate four widely used solution concepts~\cite{chalkiadakis2011computational}, as follows.
\begin{defi}[\textbf{Core and Stable Outcome}]
\label{def:stable}
Given a TU game $(\mathcal{N},v)$, the Core is the set of all stable outcomes $(\mathcal{CS},x)$ such $x(\mathcal{C})\ge v(\mathcal{C})$ for every $\mathcal{C}\subseteq \mathcal{N}$, where $x(\mathcal{C}) = \sum_{i\in\mathcal{C}} x^i$.
\end{defi}
\begin{defi}[\textbf{Efficient Outcome}]
\label{def:efficient}
If $x(\mathcal{C}) =  v(\mathcal{C})$ for any $\mathcal{C}\in \mathcal{CS}$ in an outcome $(\mathcal{CS},x)$, then this outcome is efficient and maximizes the social welfare.
\end{defi}
\begin{defi}[\textbf{Shapley Value}]
\label{def:Shaply_value_old}
Given a TU game $(\mathcal{N},v)$, the Shapley Value for each agent $i\in\mathcal{N}$ is denoted by $\phi^i(v)$ and is given by
\begin{equation*}
    \phi^i(v) = \sum_{\mathcal{C}\subseteq \mathcal{N}\setminus\{i\}} \frac{|\mathcal{C}|!(|\mathcal{N}|-|\mathcal{C}|-1)!}{|\mathcal{N}|!}\left(v(\mathcal{C}\cup\{i\}) - v(\mathcal{C})\right).
\end{equation*}
\end{defi}


\subsection{Multi-agent Actor-Critic}
The environment's state transition of MARL is influenced by the policy of all agents and it is non-stationary from a single agent's view. To alleviate this problem and stabilize training, MADDPG is proposed using a centralized $Q$-function that has global state information~\cite{lowe2017multi}. The $i$th agent maximizes its own expected return $J(\theta ^i)$ and its gradient is 
\begin{align}
\label{equ:actor_maddpg}
    & \nabla_{\theta^i}J(\theta^i) = \\
    &\Expect_{s, a \sim \mathcal{D}} \left[ \nabla_{\theta^i} \pi^i(s^i)\nabla_{a^i}Q^i(s,a^1,...,a^n)|_{a^i = \pi^i(s^i)} \right], \nonumber
\end{align}
where $Q^i(s,a^1,...,a^n)$ is a centralized action-value function, $s$ is the joint state, $\pi^i$ is the policy. The experience replay buffer $\mathcal{D}$ contains $(s, a^1, ..., a^n, s', r^1, ..., r^n)$. The centralized critic $Q^i$ is trained using the Bellman loss:
\begin{align}
\label{equ:critic_maddpg}
    & \mathcal{L}(\theta^i) = \Expect_{s, a, r, s' \sim \mathcal{D}} [y - Q^i(s, a^1,...,a^n)]^2, \nonumber \\
    & y = r^i + \gamma Q^{i\prime}(s^{\prime}, a^{1\prime},..., a^{n\prime})|_{a^{j\prime} = \pi^{j}(s^{j\prime})},
\end{align}
where $Q^{i\prime}$ is the target network and $\gamma$ is a discount factor. Note that this algorithm adopts the centralized training and decentralized execution paradigm. When testing, each agent can only access its local state to select actions.

\section{Problem Formulation}
\label{sec:formulation}
We first formulate the connected autonomous vehicles (CAVs) problem as a MARL problem. We consider $n$ agents (e.g., autonomous vehicles) in the agent set $\mathcal{N} = \{1, ..., n\}$. Each agent $i$ is associated with an action $a^i \in \mathcal{A}^i$ and a state $s^i \in \mathcal{S}^i$. The global joint state is $ s = (s^1, ..., s^n) \in \mathcal{S} \defeq \mathcal{S}^1 \times \cdots \times \mathcal{S}^n$. Similarly, the global joint action is $ a = (a^1, ..., a^n) \in \mathcal{A} \defeq \mathcal{A}^1 \times \cdots \times \mathcal{A}^n$. Each coalition has a stage-wise reward function $R(s, a, \mathcal{C})$, where we denote a subset of the vehicles $\mathcal{N}$ as a coalition $\mathcal{C}$. 
Specially, the $\mathcal{N}$ denotes the grand coalition in which all the $n$ agents cooperate together. Let $R(s,a, \mathcal{N})$ to represent the global stage-wise reward under the grand coalition. For the empty coalition $\emptyset$, $R(s,a, \emptyset) \defeq 0$. In our work, we assume that all rewards are nonnegative. 

We consider each agent is associated with a localized policy $\pi^i(a^i| s^\mathcal{C})$ to choose a local action $a^i$ given its local state $s^\mathcal{C}$. We use $\pi(a | s)$ to denote the joint policy. The system's objective is to find a policy $\pi(a | s)$ to maximize the system's discounted total reward under the grand coalition, 
\begin{equation}
\label{equ:total_reward}
   \max_{\pi(a|s)} \Expect_{a_t \sim \pi(a|s)} \left[ \sum_{t=0}^\infty \gamma^t R_{t+1}(s_t, a_t, \mathcal{N}) | s_0 = s, a_0 = a \right], \nonumber
\end{equation}
where $\gamma$ is the discount factor. However, how to motivate agents to form a grand coalition has not been studied yet.


The MARL problem is usually solved by the centralized training and decentralized execution paradigm that is first proposed in~\cite{lowe2017multi}. During training, there is a centralized critic $Q(s,a)$ that has access to the global state $s$ and global action $a$. This critic is used to train a localized actor for decentralized execution. However, having such a powerful centralized critic for autonomous vehicles is not easy, especially when each vehicle is self-interested and focuses to maximize its own reward (e.g., maximize its own driving speed or minimize its own travel time). Vehicles are not assumed to be fully cooperative.
Therefore, the goal of this paper is to design a better total reward reallocation mechanism, under which agents are willing to collaborate with each other by sharing their state and action information.

Simply using the original action value function $Q(s,a)$ is not guaranteed to learn a policy where agents are willing to cooperate. In order to find a better total reward for policy learning, we formulate and analyze the game between CAVs using the cooperative game framework. We consider a Transferable Utility game (\textbf{TU game}) $G = (\mathcal{N}, v)$, where $v$ is the characteristic function introduced in Section~\ref{sec:convex_game}. 
Consider the entire MARL problem across all timesteps, the value of any coalition $\mathcal{C}$ given the current state and action is 
\begin{equation}
    v(\mathcal{C}|s, a) = \Expect_{\pi(a|s)} \left[ \sum_{t=0}^\infty \gamma^t R_{t+1}(s_t, a_t, \mathcal{C}) | s_0 = s, a_0 = a\right]. \nonumber
\end{equation}

Specially, when $\mathcal{C} = \mathcal{N}$ we have 
\begin{equation}
    v(\mathcal{N}|s, a) = \Expect_{\pi(a|s)} \left[ \sum_{t=0}^\infty \gamma^t R_{t+1}(s_t, a_t, \mathcal{N}) | s_0 = s, a_0 = a\right]. \nonumber
\end{equation}
It shows the value of the grand coalition $\mathcal{N}$ is exactly the system's total reward. Note that the characteristic function is not the state value function V(s) used in reinforcement learning. The characteristic function is also a function of the coalition $\mathcal{C}$ in the TU game.

We want to find a stable and efficient total reward reallocation for this TU game such that agents can learn a cooperative policy. In the following section, we show how to utilize the Shapley value to solve this problem.






\section{Stable and Efficient Reward Reallocation}
\label{sec:reward_reallocation}




We first define the Shapley value for the Transferable Utility game (TU game) $G=(\mathcal{N}, v)$ defined in Section~\ref{sec:formulation} and show that the Shapley value satisfies the axioms for a fair reward reallocation. We then present the Shapley value is an efficient reward reallocation in Theorem~\ref{theorem:efficient}. Moreover, we show the Shapley value is a stable reward reallocation if the TU game $G=(\mathcal{N}, v)$ is a convex game in Theorem~\ref{theorem:core}. Then we give an example of the convex game and design Algorithm~\ref{alg:actor-critic}. An efficient solution means the system's total reward is completely distributed to each agent for cooperative policy learning. A stable solution means any agent cannot get more payoff if they leave the coalition. 
For CAVs, a stable solution means that vehicles will stay within the coalition, communicate and cooperate with other coalition members to optimize the coalition-level objective.



We define the Shapley value of the TU game $G= (\mathcal{N}, v)$ as follows.
\begin{defi}[\textbf{Shapley Value}]
\label{def:Shapley_value_new}
The Shapley value of the TU game $G= (\mathcal{N}, v)$ is defined to be
\begin{equation}
\label{equ:Shapley}
    \phi^i(s, a)\! \defeq \!\sum_{\mathcal{C}\subseteq \mathcal{N}\setminus\{i\}} \!\!\!\!\!\frac{|\mathcal{C}|!(n\!-\!|\mathcal{C}|\!-\!1)!}{n!} [v(\mathcal{C}\!\cup\! \{i\}|s, a) \!-\! v(\mathcal{C}|s,a)].
\end{equation}
\end{defi}

The Shapley value of the TU game $G= (\mathcal{N}, v)$ is extended for the multi-agent sequential decision problem from Definition~\ref{def:Shaply_value_old}. The intuition is that each agent's total reward is proportional to its total contribution to the entire system. It is used to quantify the contribution of each agent for the communication-based cooperation in MARL. Based on the definition of $\phi^i(s,a)$ in~\eqref{equ:Shapley}, it is straightforward to see that the Shapley value satisfies the axioms for a fair reward reallocation as follows:

\begin{prop}
\label{prop:fairness}
The Shapley value of the TU game $G = (\mathcal{N},v)$ satisfies the following axiomatic characterization for a fair reward reallocation:
\begin{enumerate}
    \item Symmetric: if $v(\mathcal{C} \cup \{i\}|s, a) = v(\mathcal{C} \cup \{j\}| s, a)$ for any coalition $\mathcal{C} \subseteq \mathcal{N} \setminus\{i, j\}$, then $\phi^i(s, a) = \phi^j(s, a)$.
    \item Dummy player: if $v(\mathcal{C} \cup \{i\}|s, a) = v(\mathcal{C} | s, a)$ for any coalition $\mathcal{C} \subseteq \mathcal{N}$, then $\phi^i(s, a) = 0$. 
    \item Additivity: for any two $v^1$ and $v^2$, $\phi^i(s, a |v^1 + v^2) = \phi^i(s, a |v^1 ) + \phi^i(s, a | v^2)$ for each $i$, where $\phi^i(s, a |v^1 + v^2)$ is Shapley value of the TU game $(\mathcal{N}, v^1 + v^2)$ and $(v^1 + v^2)( \mathcal{C}|s, a) = v^1( \mathcal{C}|s, a) + v^2( \mathcal{C}|s, a)$.
\end{enumerate}
\end{prop}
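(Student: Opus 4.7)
The plan is to verify each of the three axiomatic properties directly from Definition~\ref{def:Shapley_value_new} by exploiting the linearity of the Shapley formula in the marginal contributions $v(\mathcal{C}\cup\{i\}|s,a) - v(\mathcal{C}|s,a)$. Since the coefficients $|\mathcal{C}|!(n-|\mathcal{C}|-1)!/n!$ depend only on the cardinality of $\mathcal{C}$ and on $n$, each axiom reduces to a manipulation of the marginal-contribution terms under the hypothesis given.

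For Symmetry, I would partition the summation domain $\{\mathcal{C}\subseteq \mathcal{N}\setminus\{i\}\}$ in the formula for $\phi^i(s,a)$ into those $\mathcal{C}$ with $j\notin\mathcal{C}$ and those with $j\in\mathcal{C}$. In the first case, $\mathcal{C}\subseteq\mathcal{N}\setminus\{i,j\}$ and the hypothesis immediately gives $v(\mathcal{C}\cup\{i\}|s,a)-v(\mathcal{C}|s,a)=v(\mathcal{C}\cup\{j\}|s,a)-v(\mathcal{C}|s,a)$, which matches exactly the $j\notin\mathcal{C}$ portion of $\phi^j(s,a)$ with the same cardinality-weight. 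In the second case, writing $\mathcal{C}=\mathcal{C}'\cup\{j\}$ with $\mathcal{C}'\subseteq\mathcal{N}\setminus\{i,j\}$, the marginal contribution becomes $v(\mathcal{C}'\cup\{i,j\}|s,a)-v(\mathcal{C}'\cup\{j\}|s,a)$; applying the symmetry hypothesis once more to the subtracted term (using the coalition $\mathcal{C}'$ in $\mathcal{N}\setminus\{i,j\}$) converts this into $v(\mathcal{C}'\cup\{i,j\}|s,a)-v(\mathcal{C}'\cup\{i\}|s,a)$, which is precisely the matching term in $\phi^j(s,a)$. Equal weights then yield $\phi^i(s,a)=\phi^j(s,a)$.

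For the Dummy player and Additivity axioms, the argument is essentially a one-liner. For Dummy, the hypothesis forces every bracketed marginal contribution in \eqref{equ:Shapley} to vanish, so the whole weighted sum is zero. For Additivity, I would substitute $(v^1+v^2)(\mathcal{C}|s,a)=v^1(\mathcal{C}|s,a)+v^2(\mathcal{C}|s,a)$ into \eqref{equ:Shapley}, distribute the subtraction inside each bracket, and split the sum using linearity; the two resulting sums are by definition $\phi^i(s,a\mid v^1)$ and $\phi^i(s,a\mid v^2)$.

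The main obstacle, such as it is, is the bookkeeping in the Symmetry case: one must be careful that the bijection $\mathcal{C}\leftrightarrow(\mathcal{C}\setminus\{j\})\cup\{i\}$ between subsets of $\mathcal{N}\setminus\{i\}$ containing $j$ and subsets of $\mathcal{N}\setminus\{j\}$ containing $i$ preserves cardinality, so that the combinatorial weights line up exactly across the two Shapley sums. Once this cardinality-preserving pairing is written out, all three axioms follow without any further analytical machinery, and the proof is complete.
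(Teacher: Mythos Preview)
Your proposal is correct and is the standard direct verification from the Shapley formula. The paper itself does not supply a proof of Proposition~\ref{prop:fairness}: it merely remarks that ``it is straightforward to see'' from Definition~\ref{def:Shapley_value_new} that the axioms hold, and then cites \cite{chalkiadakis2011computational}. Your argument fills in exactly those straightforward details, so there is nothing to compare against beyond noting that you have written out what the paper left implicit.
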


If a reward reallocation satisfies the axioms of symmetric, dummy player, and additivity, it is called a fair reward reallocation~\cite{chalkiadakis2011computational}. Besides fairness, we show that the \textbf{outcome} $( \{\mathcal{N} \}, \phi)$ with Shapley value is an efficient outcome of the TU game $G= (\mathcal{N}, v)$, where $\{\mathcal{N} \}$ is the coalition structure for the grand coalition, and $\phi(s,a) = (\phi^1(s,a), ..., \phi^n(s,a))$ is the payoff vector for each agent given the current state $s$ and action $a$. The payoff $\phi^i(s,a)$ for each agent is the reallocated total reward. It is regarded as a new action value function after the reward reallocation.
\begin{theorem} 
\label{theorem:efficient}
The outcome $( \{\mathcal{N} \}, \phi)$ with Shapley value is an efficient outcome of the TU game $G= (\mathcal{N}, v)$ and the Shapley value is an efficient reward reallocation.
\end{theorem}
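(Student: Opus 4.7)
The plan is to unpack Definition~\ref{def:efficient} applied to the outcome $(\{\mathcal{N}\}, \phi)$. Since the coalition structure consists of the single grand coalition $\mathcal{N}$, efficiency reduces to verifying the single identity
\begin{equation*}
\sum_{i \in \mathcal{N}} \phi^i(s,a) \;=\; v(\mathcal{N}|s,a).
\end{equation*}
Once this is established, the second clause of the theorem (that $\phi$ is an efficient reward reallocation) is just a restatement: the sum of the per-agent payoffs equals the system's total discounted reward under the grand coalition, so no reward is lost in the reallocation.

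First, I would record the boundary condition $v(\emptyset|s,a) = 0$, which follows from the assumption $R(s,a,\emptyset) \defeq 0$ stated in Section~\ref{sec:formulation}. This is the ingredient that lets a telescoping argument close cleanly. Then I would rewrite the combinatorial weight in~\eqref{equ:Shapley} in permutation form: for each permutation $\sigma$ of $\mathcal{N}$, let $P^\sigma_i$ denote the set of agents preceding $i$ in $\sigma$. Since exactly $|\mathcal{C}|!\,(n-|\mathcal{C}|-1)!$ permutations satisfy $P^\sigma_i = \mathcal{C}$ for a fixed $\mathcal{C} \subseteq \mathcal{N} \setminus \{i\}$, the Shapley value admits the equivalent representation
\begin{equation*}
\phi^i(s,a) \;=\; \frac{1}{n!}\sum_{\sigma} \bigl[v(P^\sigma_i \cup \{i\}|s,a) - v(P^\sigma_i|s,a)\bigr].
\end{equation*}

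Next, I would sum over $i \in \mathcal{N}$ and swap the order of summation. For each fixed $\sigma$, enumerating agents in the order prescribed by $\sigma$ makes the inner sum over $i$ telescope from $v(\emptyset|s,a)$ up to $v(\mathcal{N}|s,a)$. Applying $v(\emptyset|s,a) = 0$ and averaging over the $n!$ permutations then yields $v(\mathcal{N}|s,a)$, which is exactly the efficiency identity.

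The main obstacle is essentially bookkeeping rather than anything deep: one must justify the permutation rewriting of the Shapley weight and carefully verify that the telescoping works simultaneously for every $\sigma$. Since the characteristic function $v(\cdot|s,a)$ is defined pointwise in $(s,a)$ via an expectation of cumulative rewards, this identity holds for each $(s,a)$ independently, so no additional measurability or interchange-of-expectation arguments are required.
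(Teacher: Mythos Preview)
Your proposal is correct. The paper's proof reaches the same identity $\sum_{i\in\mathcal{N}}\phi^i(s,a)=v(\mathcal{N}|s,a)$ but by a different bookkeeping device: instead of passing to the permutation representation, it expands the double sum $\sum_i\sum_{\mathcal{C}\subseteq\mathcal{N}\setminus\{i\}}$ directly and tracks the signed coefficient of each $v(\mathcal{C}|s,a)$. For a coalition $\mathcal{C}$ of size $p$ with $0<p<n$, the term $v(\mathcal{C}|s,a)$ is picked up $p$ times with coefficient $+\frac{(p-1)!(n-p)!}{n!}$ (once for each $i\in\mathcal{C}$, as the ``$\mathcal{C}\cup\{i\}$'' term) and $n-p$ times with coefficient $-\frac{p!(n-p-1)!}{n!}$ (once for each $i\notin\mathcal{C}$, as the ``$\mathcal{C}$'' term); these cancel, leaving only $v(\mathcal{N}|s,a)$ and $v(\emptyset|s,a)=0$. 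Your permutation route front-loads the combinatorics into the rewriting of the weight and then telescopes trivially along each ordering, which is cleaner conceptually and generalizes immediately to weighted/random-order values; the paper's coefficient-counting avoids introducing the permutation form altogether and stays entirely within the subset formulation of Definition~\ref{def:Shapley_value_new}. Either argument is complete here, and both rely on exactly the same two ingredients you identified: $v(\emptyset|s,a)=0$ and the fact that the computation is pointwise in $(s,a)$.
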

\begin{proof}
Note that $v(\emptyset|s,a) = 0$, we have
\begin{align}
      \sum_{i \in \mathcal{N}} \phi^i(s, a) &= \sum_{i \in \mathcal{N}}  \sum_{\mathcal{C}\subseteq \mathcal{N}\setminus\{i\}} \!\!\!\!\!\frac{|\mathcal{C}|!(n\!-\!|\mathcal{C}|\!-\!1)!}{n!} [v(\mathcal{C}\!\cup\! \{i\}|s, a) \!\nonumber\\
      -\! v(\mathcal{C}|s,a)] &= n \frac{(n-1)!1!}{n!} v(\mathcal{N}|s,a) + n \frac{0!(n-1)!}{n!} v(\emptyset|s,a) \nonumber\\
    & + \sum_{\mathcal{C}\subset \mathcal{N}, \mathcal{C} \neq \emptyset, |\mathcal{C}|=p} ( p \frac{(p-1)!(n-p)!}{n!} -  \nonumber \\
    & (n-p) \frac{p!(n-p-1)!}{n!}) v(\mathcal{C}|s, a) \nonumber \\
    & = v(\mathcal{N}|s,a).
\end{align}
Note that the first equation is a telescoping sum where both $v(\mathcal{N}|s,a)$ and $v(\emptyset|s,a)$ appear once for each agent. The value of coalition $\mathcal{C}$ with $p$ agents appears $p$ times with positive sign, once for each agent in $\mathcal{C}$; it also appears $n-p$ times with negative sign, once for each agent not in $\mathcal{C}$.
\end{proof}

Moreover, if the TU game $G= (\mathcal{N}, v)$ is a convex game, we show that the Shapley value is a stable reward reallocation. A convex game is defined as follows:
\begin{defi}[\textbf{Convex Game}]
The TU game $G =(\mathcal{N},v)$ is a convex game if for any pair of coalitions $\mathcal{C},\mathcal{D}\subseteq \mathcal{N}$,
\begin{equation*}
    v(\mathcal{C}\cup\mathcal{D}|s, a) + v(\mathcal{C}\cap\mathcal{D}|s, a) \ge v(\mathcal{C}|s, a) + v(\mathcal{D}|s, a).
\end{equation*}
\end{defi}

For a convex game, we have the following theorem that guarantees the core is nonempty, and there exists a stable solution for reallocating $v(\mathcal{N}|s,a)$ to each individual agent to motivate cooperation. 

\begin{theorem} 
\label{theorem:core}
If the TU game $G= (\mathcal{N}, v)$ is a convex game, the outcome $( \{\mathcal{N} \}, \phi)$ with Shapley value is in the core and the Shapley value is a stable reward reallocation.
\end{theorem}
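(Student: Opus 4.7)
The plan is to verify that $(\{\mathcal{N}\},\phi)$ satisfies the core inequality $\sum_{i\in\mathcal{C}}\phi^i(s,a)\ge v(\mathcal{C}|s,a)$ for every $\mathcal{C}\subseteq\mathcal{N}$. The matching equality at $\mathcal{C}=\mathcal{N}$ is exactly the efficiency established in Theorem~\ref{theorem:efficient}, so once the core inequalities are proved, $(\{\mathcal{N}\},\phi)$ is a valid coalition-structure/payoff outcome provided each $\phi^i\ge 0$, which will fall out as a byproduct.

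The first step is to rewrite Eq.~(\ref{equ:Shapley}) as an average of marginal-contribution vectors over permutations of $\mathcal{N}$. For a permutation $\sigma$ of $\{1,\ldots,n\}$, let $P^\sigma_i$ be the set of agents strictly preceding $i$ in $\sigma$, and define $m^\sigma$ by $m^\sigma_i(s,a)=v(P^\sigma_i\cup\{i\}|s,a)-v(P^\sigma_i|s,a)$. Grouping permutations by the coalition $P^\sigma_i$ (there are exactly $|\mathcal{C}|!(n-|\mathcal{C}|-1)!$ permutations with $P^\sigma_i=\mathcal{C}$) recovers
\begin{equation}
\phi^i(s,a) \;=\; \frac{1}{n!}\sum_{\sigma} m^\sigma_i(s,a),
\end{equation}
so $\phi$ is a convex combination of the $n!$ marginal vectors. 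Since the core is a convex set, it is enough to show every $m^\sigma$ lies in the core.

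The key step uses convexity to derive a monotonicity of marginal contributions: applying the convex-game inequality with $\mathcal{C}=\mathcal{T}$ and $\mathcal{D}=\mathcal{S}\cup\{i\}$, for any $\mathcal{S}\subseteq\mathcal{T}\subseteq\mathcal{N}\setminus\{i\}$, gives $v(\mathcal{S}\cup\{i\}|s,a)-v(\mathcal{S}|s,a)\le v(\mathcal{T}\cup\{i\}|s,a)-v(\mathcal{T}|s,a)$. Now fix $\mathcal{C}\subseteq\mathcal{N}$ and enumerate its elements $j_1,\ldots,j_k$ in the order induced by $\sigma$; put $\mathcal{C}_l=\{j_1,\ldots,j_l\}$ with $\mathcal{C}_0=\emptyset$. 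Because every element of $\mathcal{C}_{l-1}$ precedes $j_l$ in $\sigma$ by construction, one has $\mathcal{C}_{l-1}\subseteq P^\sigma_{j_l}$, so the monotone-marginal bound yields $m^\sigma_{j_l}(s,a)\ge v(\mathcal{C}_l|s,a)-v(\mathcal{C}_{l-1}|s,a)$. Summing telescopes to $\sum_{l=1}^{k}m^\sigma_{j_l}(s,a)\ge v(\mathcal{C}|s,a)$, so $m^\sigma$ lies in the core. Averaging over $\sigma$ then gives $\sum_{i\in\mathcal{C}}\phi^i(s,a)\ge v(\mathcal{C}|s,a)$; and the same monotonicity at $\mathcal{S}=\emptyset$, together with the nonnegative-reward assumption of Section~\ref{sec:formulation} (which makes $v\ge 0$), yields $m^\sigma_i\ge v(\{i\}|s,a)\ge 0$ and thus $\phi^i\ge 0$.

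The main obstacle I expect is the telescoping bookkeeping in the last step: one must verify carefully that $\mathcal{C}_{l-1}\subseteq P^\sigma_{j_l}$ so the monotone-marginal inequality can be invoked termwise, which is where the enumeration of $\mathcal{C}$ along $\sigma$ pulls its weight. Apart from this, the $(s,a)$-dependence causes no additional difficulty because both the convexity hypothesis and Eq.~(\ref{equ:Shapley}) are pointwise-in-$(s,a)$ statements, and the classical Shapley-in-the-core argument for convex games transfers verbatim.
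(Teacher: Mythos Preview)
Your proposal is correct and follows essentially the same route as the paper: show that each permutation's marginal-contribution vector lies in the core via convexity and a telescoping sum, then use that the Shapley value is a convex combination of these vectors and that the core is convex. The paper carries this out only for the identity permutation and then appeals to the convex-combination argument, whereas you treat an arbitrary permutation and also explicitly verify $\phi^i\ge 0$; this extra care is welcome but does not change the underlying strategy.
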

\begin{proof}
Note that the Shapley value for the TU game $G = (\mathcal{N}, v)$ is 
\begin{equation}
    \phi^i(s, a)\! = \!\sum_{\mathcal{C}\subseteq \mathcal{N}\setminus\{i\}} \!\!\!\!\!\frac{|\mathcal{C}|!(n\!-\!|\mathcal{C}|\!-\!1)!}{n!} [v(\mathcal{C}\!\cup\! \{i\}|s, a) \!-\! v(\mathcal{C}|s,a)]. \nonumber
\end{equation}

We then show that Shapley value allocation lies in the core of the TU game $G= (\mathcal{N}, v)$. 

Let the payoff vector defined by the marginal contribution of each agent, i.e., $x^i = v(\mathcal{C}\cup \{i\}|s,a) - v(\mathcal{C}|s,a)$ for $i\in\mathcal{N}$. The goal is to prove that $x=(x^1,\ldots,x^n)$ is in the core. For any  coalition $\mathcal{C}=\{i^1,i^2,\ldots,i^k\} \subseteq \mathcal{N}= \{1, ..., n\}$, we can write $v(\mathcal{C})$ as follows by telescoping sum, 
\begin{align}
    & v(\mathcal{C}|s,a) = v(\{i^1\}|s,a)-v(\emptyset|s,a) + v(\{i^1,i^2\}|s,a) \nonumber \\
    &- v(\{i^1\}|s,a) +\cdots+v(\mathcal{C}|s,a) - v(\mathcal{C}\setminus\{i^k\}|s,a).
\end{align}

Without loss generality, we assume that $i^1<i^2<\cdots<i^k$. Then for each $j \in\{1,\ldots,k\}$, we let $\mathcal{D}=\{1, 2, \ldots, i^{j-1}\}$ and by the definition of convex games, 
\begin{align}
    &v(\{i^1,\ldots,i^{j-1},i^j\}|s,a) - v(\{i^1,\ldots,i^{j-1}\}|s,a) \nonumber\\
    \le\ & v(\{1,\ldots,i^{j-1},i^j\}|s,a ) - v(\{1,\ldots,i^{j-1}\}|s,a)\nonumber\\
    =\ &x^{i^j},\label{eq:thm1_shapley_core}
\end{align}
where the last equation is based on $x^{i^j} = v(\mathcal{D}\cup \{i^j\}|s,a) - v(\mathcal{D}|s,a)$. By summing up \eqref{eq:thm1_shapley_core} from $j=1$ to $k$, we have $v(\mathcal{C}|s,a)\le \sum_{j=1}^k x^{i^j} = x(\mathcal{C}|s,a)$. Therefore, $x$ is a stable solution in the core based on Definition~\ref{def:stable} and any coalition $\mathcal{C}$ will not have incentives to deviate from the grand coalition. Moreover, since Shapley value is a convex combination of our constructed $x$, and core is a convex set, the outcome $( \{\mathcal{N} \}, \phi)$ with Shapley value is also stable and lies in the core. 


\end{proof}

According to Theorem~\ref{theorem:efficient} and Theorem~\ref{theorem:core}, the outcome $(\{\mathcal{N} \}, \phi)$ with Shapley value is a stable and efficient outcome of the TU game $G= (\mathcal{N}, v)$ if $G$ is a convex game. No agent wants to deviate from the grand coalition in this case. Then, using the action value function $Q(s,a)$ for the policy learning is not the best choice. We can reallocate the system's total reward to each agent using the Shapley value $\phi^i(s, a)$. It is a fair, stable, and efficient total reward reallocation for a convex game. Now, we give an example for the convex game. 
\begin{prop}
\label{prop:convex_example}
When $R(s, a, \mathcal{C})= \sum_{i\in \mathcal{C}} R(s, a, \{i\})$ for any nonempty coalition $\mathcal{C} \subseteq \mathcal{N}$, the TU game $G =(\mathcal{N},v)$ is a convex game.
\end{prop}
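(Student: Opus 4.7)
The plan is to exploit the additive reward structure to show that the value function itself becomes additive in the agents, after which the convex game inequality reduces to a trivial inclusion-exclusion identity for sums.

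First I would show that, under the hypothesis $R(s,a,\mathcal{C}) = \sum_{i \in \mathcal{C}} R(s,a,\{i\})$ for every nonempty $\mathcal{C}$, the value function satisfies $v(\mathcal{C}|s,a) = \sum_{i \in \mathcal{C}} v(\{i\}|s,a)$. This is a direct computation: plug the additive reward into the definition of $v(\mathcal{C}|s,a)$, swap the finite inner sum over $i \in \mathcal{C}$ with the outer time sum and the expectation by linearity, and identify each term as $v(\{i\}|s,a)$. The nonnegativity of rewards together with the discount factor $\gamma \in [0,1)$ justifies the interchange. The empty-coalition case is consistent with the convention $v(\emptyset|s,a)=0$ since the empty sum is zero.

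Next I would apply the elementary set identity $\sum_{i \in \mathcal{C}\cup\mathcal{D}} x_i + \sum_{i \in \mathcal{C}\cap\mathcal{D}} x_i = \sum_{i \in \mathcal{C}} x_i + \sum_{i \in \mathcal{D}} x_i$ with $x_i = v(\{i\}|s,a)$. Combined with the additivity established in the first step, this gives
\begin{equation*}
v(\mathcal{C}\cup\mathcal{D}|s,a) + v(\mathcal{C}\cap\mathcal{D}|s,a) = v(\mathcal{C}|s,a) + v(\mathcal{D}|s,a),
\end{equation*}
which is the convex game condition with equality. Hence $G$ is a convex game.

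I do not anticipate a serious obstacle here: the proof is essentially an algebraic rearrangement once additivity of $v$ in the coalition argument has been established. The only point requiring care is the justification for exchanging the inner coalition sum with the infinite time sum and the expectation; this is the sole technical step and is handled by nonnegativity and geometric discounting. The convex game inequality in fact holds as an equality in this special case, which also means that such games are on the boundary of convexity and any strict superadditive ``synergy'' from cooperation would need to come from a nonadditive component of the reward.
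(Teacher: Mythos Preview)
Your proposal is correct and follows essentially the same approach as the paper: both first establish the additivity $v(\mathcal{C}|s,a) = \sum_{i\in\mathcal{C}} v(\{i\}|s,a)$ by linearity of expectation, and then reduce the convexity inequality to the inclusion--exclusion identity for sums over $\mathcal{C}\cup\mathcal{D}$ and $\mathcal{C}\cap\mathcal{D}$. Your observation that the convex game condition holds with equality here is accurate and slightly sharper than the paper's presentation, which writes the final step as an inequality even though the preceding computation is an identity.
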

\begin{proof}
For any nonempty coalition $\mathcal{C} \subseteq \mathcal{N}$, $s\in\mathcal{S}$, and $a\in\mathcal{A}$, we have
\begin{align}
\label{equ:characteristic}
    &v(\mathcal{C}|s, a) = \Expect_{a_t \sim \pi(a|s)} \left[ \sum_{t=0}^\infty \gamma^t R_{t+1}(s_t, a_t, \mathcal{C}) | s_0 = s, a_0 = a\right] \nonumber \\
    &=  \sum_{i\in{\mathcal{C}}} \Expect_{a_t \sim \pi(a|s)} \left[ \sum_{t=0}^\infty \gamma^t R_{t+1}(s_t, a_t, \{i\}) | s_0 = s, a_0 = a \right] \nonumber \\
    & = \sum_{i\in{\mathcal{C}}} v(\{i\}|s, a).
\end{align}
Then we have for any pair of nonempty coalitions $\mathcal{C},\mathcal{D}\subseteq \mathcal{N}$, $s\in\mathcal{S}$, and $a\in\mathcal{A}$
\begin{align}
    & v(\mathcal{C}\cup\mathcal{D}|s, a) + v(\mathcal{C}\cap\mathcal{D}|s, a) \nonumber \\
     =& \sum_{i\in\mathcal{C}\cup\mathcal{D}} v(\{i\}|s, a) + \sum_{i\in\mathcal{C}\cap\mathcal{D}} v(\{i\}|s, a)\nonumber \\
    =& \left( \sum_{i\in\mathcal{C}} v(\{i\}|s, a) +  \sum_{i\in\mathcal{D}\setminus \mathcal{C}} v(\{i\}|s, a) \right) +  \sum_{i\in \mathcal{C}\cap\mathcal{D}} v(\{i\}|s, a) \nonumber \\
    =& \sum_{i\in\mathcal{C}} v(\{i\}|s, a) +  \sum_{i\in\mathcal{D}} v(\{i\}|s, a) \nonumber \\
    \geq& v(\mathcal{C}|s, a) + v(\mathcal{D}|s, a). 
\end{align}
If the pair of coalitions $\mathcal{C},\mathcal{D}\subseteq \mathcal{N}$ has at least one empty set, it is easy to check $v(\mathcal{C}\cup\mathcal{D}|s, a) + v(\mathcal{C}\cap\mathcal{D}|s, a) \ge v(\mathcal{C}|s, a) + v(\mathcal{D}|s, a)$ considering $v(\emptyset|s,a) = 0$.
\end{proof}

Proposition~\ref{prop:convex_example} gives us one example of the convex game. There exits other kinds of convex games, for example, when $R(s, a, \mathcal{C})= - \max_{i\in \mathcal{C}} R(s, a, \{i\})$~\cite{littlechild1973simple} for any $\mathcal{C} \neq \emptyset$.

\begin{algorithm}[h]
\small
\SetAlgoLined
 Randomly initialize the characteristic network $v$ and the actor network $\pi^i$ for agent $i$. Initialize target networks $v^{\prime}, \pi^{i\prime}$\;
 \For {each episode}
 {
     Initialize a random process $\mathcal{X}$ for action exploration\;
     Receive initial state $\textbf{s}$\;
    \For {each timestep}
    { 
        Randomly sample a coalition $\mathcal{C}$. All agents in coalition $\mathcal{C}$ cooperate with each other at this time step.
        
        For each agent $i$, select action $a^i=\pi^i(s^\mathcal{C}) + \mathcal{X} $ w.r.t the current policy and exploration. Execute actions $\textbf{a} = (a^1,...,a^n)$ and observe the reward $\mathbf{r}$ and the new state information $\textbf{s}^{\prime}$. Store $(\textbf{s}, \mathbf{a}, \mathcal{C}, \mathbf{r}, \textbf{s}^{\prime})$ in replay buffer $\mathcal{D}$. Set $\textbf{s} \leftarrow \textbf{s}^{\prime}$\;
        \For {each agent}
        {
            Sample a random minibatch of samples $(\textbf{s}_k, \textbf{a}_k, \mathcal{C}_k, \mathbf{r}_k, \textbf{s}_k^{\prime})$ from $\mathcal{D}$\;
            
            
            
            Set $y_k = r^\mathcal{C}_k + \gamma v^{\prime}(\textbf{s}_k^{\prime}, \textbf{a}_k^{\prime}, \mathcal{C}_k)|_{a^{i\prime} = \pi^{i\prime}(s^{\mathcal{C}\prime})}$\;
            
            Update the characteristic network by minimizing the loss $\mathcal{L}(\theta) = \frac{1}{K}\sum_k (y_{k} - v(\textbf{s}_k, \textbf{a}_k, \mathcal{C}_k))^2$\;
            
            Calculate the Shapley value as $\phi^i(s, a)\! \defeq \!\sum_{\mathcal{C}\subseteq \mathcal{N}\setminus\{i\}} \!\!\!\!\!\frac{|\mathcal{C}|!(n\!-\!|\mathcal{C}|\!-\!1)!}{n!} [v(s, a, \mathcal{C}\!\cup\! \{i\}) \!-\! v(s,a, \mathcal{C})]$\;
            
            Update actor using the gradient $\nabla_{\theta^i}J \approx \frac{1}{K}\sum_k \nabla_{\theta^i} \pi^i(s^\mathcal{C})\nabla_{a^i}\phi^i(\textbf{s}_k, \textbf{a}_k)$ where $a^i = \pi^i(s^\mathcal{C})$\;
        }
        Update all target networks: $\theta^{i\prime} \leftarrow \tau\theta^i + (1-\tau)\theta^{i\prime}$.
    }
 }
 
 \caption{Cooperative Policy Learning with Shapley Value Reward Reallocation}
 \label{alg:actor-critic}
\end{algorithm}
\setlength{\textfloatsep}{1pt}

Based on the Shapley value defined in Definition~\ref{def:Shapley_value_new}, we propose the Algorithm~\ref{alg:actor-critic} for each vehicle to learn a cooperative policy. Instead of modeling the action value function, we use a neural network to learn the characteristic function $v(s, a, \mathcal{C})$. This is used to calculate the Shapley value in step 12. Then we use the Shaley value as the reallocated total reward to learn a cooperative policy, as we know the Shaley value reward reallocation is fair and efficient. Moreover, the Shaley value reward reallocation is stable for a convex game. This algorithm adopts centralized training and decentralized execution paradigm that is first proposed in~\cite{lowe2017multi}. The effectiveness of this algorithm is evaluated in the following experiment section.

\section{Experiment}
\label{sec:experiment}
In this section, we use CARLA~\cite{Dosovitskiy17}, an open-source simulator that supports the development, training, and validation of autonomous driving systems, to validate our proposed method. 
The host machine adopted in our experiments is a server configured with Intel Core i9-10900X processors and four NVIDIA Quadro RTX 6000 GPUs. Our experiments are performed on Python 3.5.4, GCC7 7.5, openAI gym 0.10.5, numpy 1.14.5, tensorflow 1.8.0, and CUDA 9.0. 

We consider a 3-lane freeway scenario with CAVs as shown in Fig.~\ref{fig:freeway_carla}. We use Algorithm~\ref{alg:actor-critic} for CAVs to learn a cooperative policy for behavior planning to decide whether to change or keep lane. For each vehicle, the action set $\mathcal{A}^i$ includes \{Keep Lane (KL), Change Left (CL), Change Right (CR), Emergency Stop (ES)\}. The stage-wise reward for each vehicle involves their velocity and comfort. The comfort of a vehicle (for passenger's experience) is defined based on its acceleration and action $a^i$ as follows:
\begin{equation}
\label{equ:comfort}
\small
    comfort = \begin{cases} 
			 3, & \text{if } \lvert acceleration \rvert< \Theta \text{ and } a^i = KL; \\
			2, & \text{if } \lvert acceleration \rvert \geq \Theta \text{ and } a^i = KL; \\
			 1, & \text{if } a^i = CL/CR; \\
			 0, & \text{if in } ES. \end{cases} 
\end{equation}
where $\Theta$ is a predefined threshold. The reward function for vehicle $i$ is defined as:
\begin{equation}
R(s, a, \{i\}) = w \cdot velocity + comfort,
\label{equ:reward}
\end{equation}
where $w$ is a trade-off weight. We randomly sample a coalition $\mathcal{C}$ at each timestep. All vehicles in the coalition $\mathcal{C}$ communicate and cooperate with other coalition members. Let $R(s, a, \mathcal{C})= \sum_{i\in \mathcal{C}} R(s, a, \{i\})$ to represent the coalition's state-wise reward. 
While training, we use Shapley value to reallocate the reward to encourage communication-based cooperation and improve the system's total reward in Eq.~\ref{equ:total_reward}. As analysed in Section~\ref{sec:reward_reallocation}, our reward reallocation method is fair (Proposition~\ref{prop:fairness}), efficient (Theorem~\ref{theorem:efficient}), and stable (Theorem~\ref{theorem:core}).

\begin{figure}[h]
  \centering
  \includegraphics[width=2in]{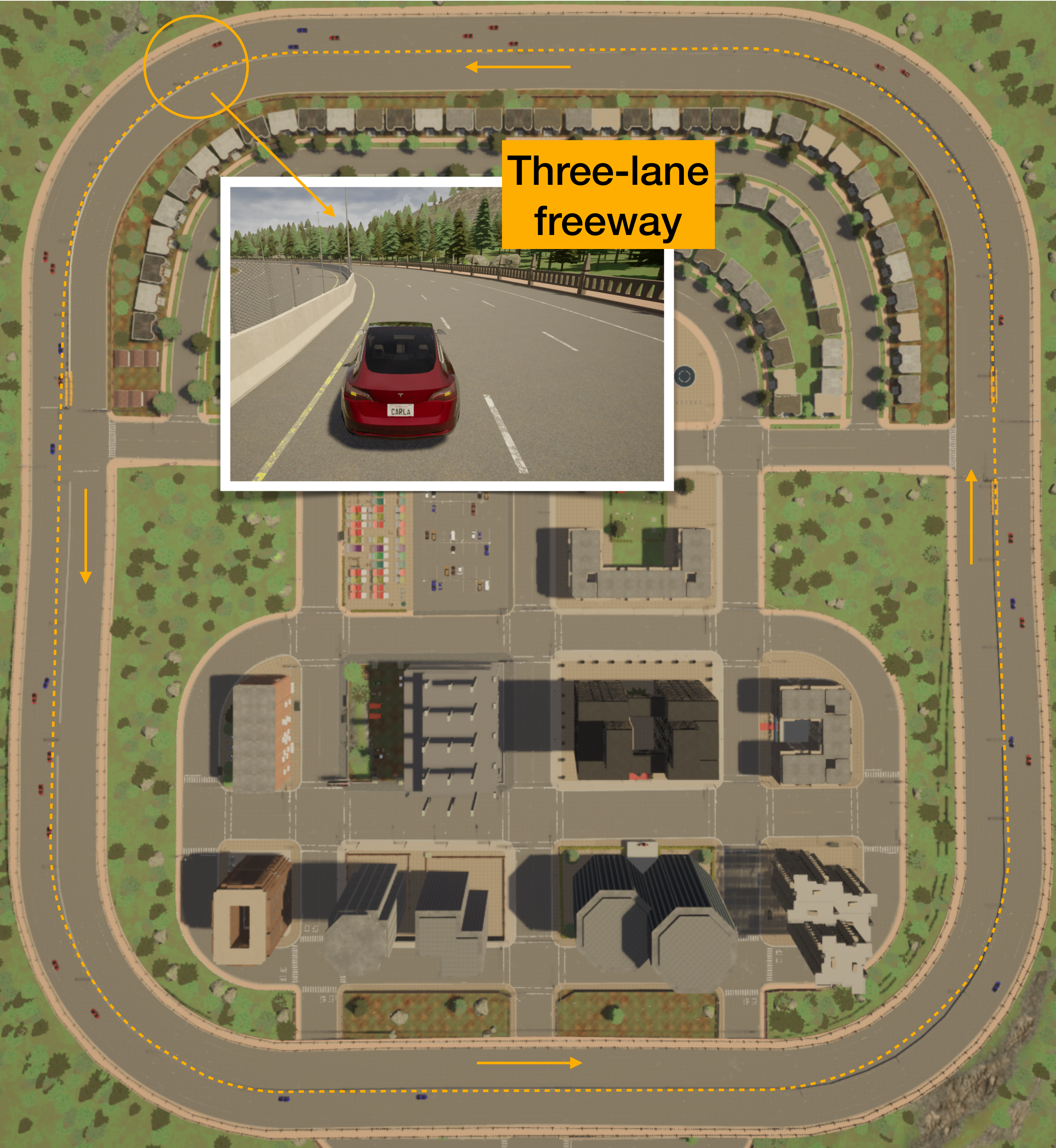}
  \caption{The example scenario of a 3-lane freeway in CARLA~\cite{Dosovitskiy17}. Vehicles are scattered on the outer-loop of the map ``Town 05". The environment can be mixed traffic with both autonomous and human-driven vehicles.}\label{fig:freeway_carla}
\end{figure}

Each vehicle's states include its position, velocity, acceleration, image and point clouds that are captured by the onboard camera and LIDAR sensors respectively. The resolution of camera image is $375\times1,242$ pixels. 
Each point cloud from LIDAR is stored with the 3 coordinates (the ego vehicle being the origin), representing forward, left, and up respectively, and an additional reflectance value. 
We use methods in~\cite{shanglin2021vision} to process images and point clouds together, with Ultra-Fast-Lane-Detection~\cite{qin2020ultra} (ResNet-18~\cite{he_deep_2016} as the backbone) and PointPillars~\cite{lang2019pointpillars} for 3D object detection.

We assume all CAVs share their states, actions, and environment perception with others under the agreement constructed using Shapley value defined in Definition~\ref{def:Shapley_value_new}. We use a neural network to learn the characteristic function $v(s, a, \mathcal{C})$ and then use it to calculate the Shapley value as the reallocated total reward. Then we update the local policy by the Shapley value as shown in Alg.~\ref{alg:actor-critic}.
We use recent advances in safe RL: model predictive shielding (MPS) that has formal safety guarantee~\cite{li2020robust, zhang2019mamps} to maintain a safe learning process. 


\subsection{Comparison with Baselines}
We evaluate Algorithm~\ref{alg:actor-critic} against the state-of-the-art methods using 30 CAVs to show how Shapley value-based reward reallocation can be used to learn a cooperative policy to improve the system's total reward. We follow the open-source implementations from MADDPG~\cite{lowe2017multi}, M3DDPG~\cite{li2019robust}, and COMA~\cite{foerster2018counterfactual}. For all algorithms, agents are trained for 100k episodes with 5 random seeds and a maximum of 40 steps for each episode. We compare the mean episode system reward (system's total reward averaged over every 1000 episodes) as shown in Fig.~\ref{fig:baseline}. We observe that the Shapley value-based reward reallocation outperforms the state-of-the-art methods. This is because our method encourages more cooperation among vehicles. Vehicles can change lanes cooperatively to get a larger average velocity and comfort for the entire system. We also observe that our method converges slower because it relies on the estimation of the Shapley value which is not accurate at the beginning. The M3DDPG is a bit conservative because it considers the robustness to the worst-case scenario. 

\begin{figure}[h]
  \centering
  \includegraphics[width=3.4in]{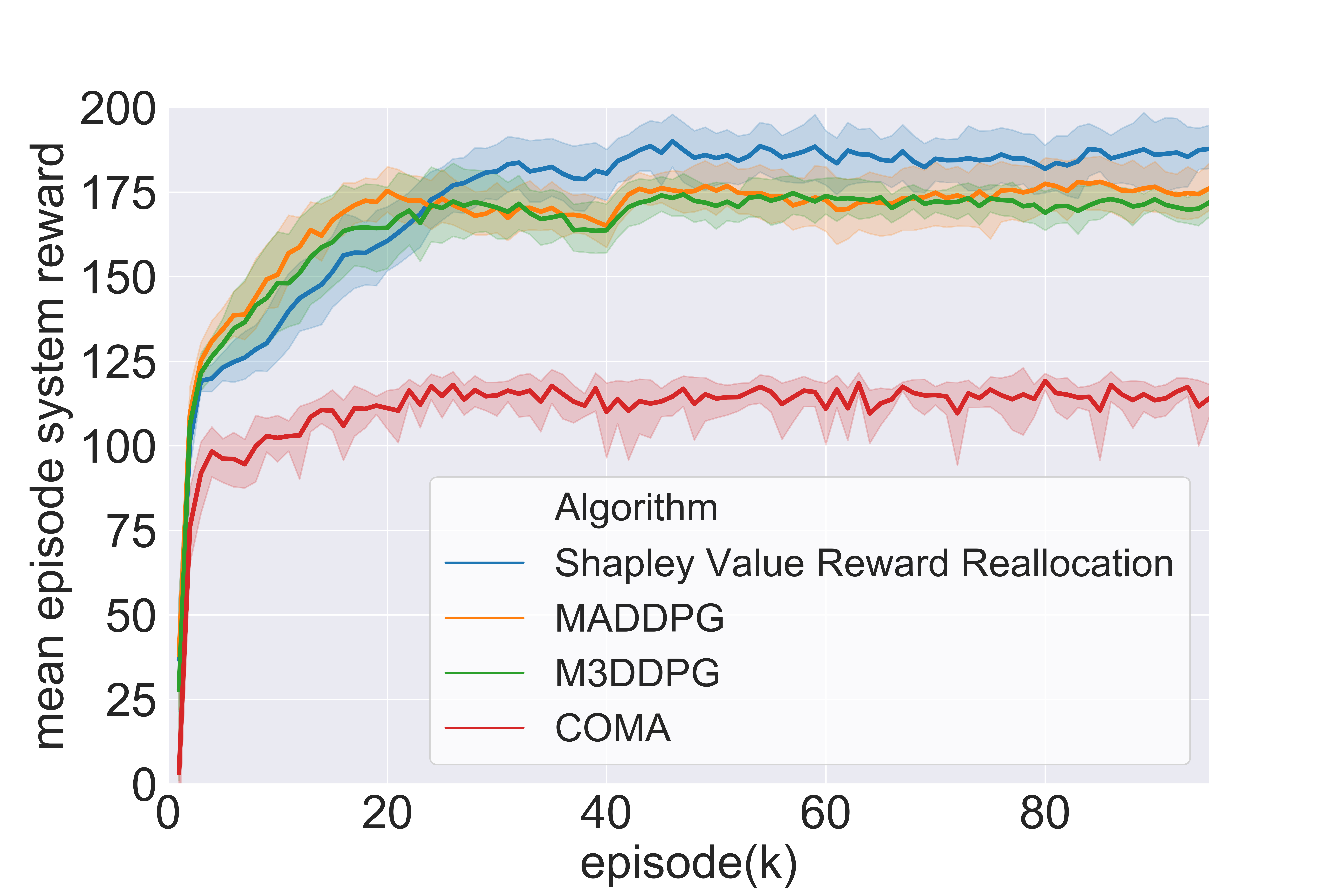}
  \vspace{-20pt}
  \caption{Comparison between our Shapley value reward reallocation method with baselines. Our method gets larger mean episode system reward.}\label{fig:baseline}
  \vspace{-10pt}
\end{figure}

\subsection{Mixed Traffic}


In this section, we show that our algorithm is also effective for CAVs in mixed traffic scenarios where the traffic includes both CAVs and human-driven vehicles. In this set of experiments, the total number of vehicles is 30. We change the CAV ratio (the total CAV number divided by the total number of all vehicles) from 0 to 1 as listed in Table~\ref{tbl:mixed_setting}. We use our Alg.~\ref{alg:actor-critic} and COMA for CAV's policy learning in two sets of experiments. We use CARLA's built-in autopilot mode to simulate human-driven vehicles~\cite{Dosovitskiy17}. As human-driven vehicle cannot share their states and actions, the CAVs can only form coalitions to estimate the centralized critic using $Q_{\pi^\mathcal{C}}(s^\mathcal{C}, a^\mathcal{C})$ that only has information from agents within coalition $\mathcal{C}$. The joint policy of coalition $\mathcal{C}$ is $\pi^\mathcal{C}(a^\mathcal{C} | s^\mathcal{C})$. In the implementation, we simply set the missing input of the neural network to be zero.

\begin{table}[h]
  \centering
  \caption{The system efficiency comparison under the mixed traffic.}
  \begin{tabular}{c|cc|cc}
  \toprule
  CAV & CAV  & human-driven & average  & average  \\
  ratio & number & vehicle number & velocity (mph) & comfort\\
  \midrule
  Algorithm & \multicolumn{4}{c}{Shapley Value Reward Reallocation} \\
  \midrule
  0 & 0 & 30 & 60.06 & 2.61 \\
  0.17 & 5 & 25 & 61.75 & 2.63 \\
  0.33 & 10 & 20 & 64.71 & 2.68 \\
  0.5 & 15 & 15 & 65.12 & 2.71 \\
  0.67 & 20 & 10 & 65.19 & 2.74 \\
  0.83 & 25 & 5 & 65.49 & 2.76 \\
  1 & 30 & 0 & 66.14 & 2.81 \\
  \midrule
  Algorithm & \multicolumn{4}{c}{COMA} \\
  \midrule
  0 & 0 & 30 & 60.06 & 2.61 \\
  0.17 & 5 & 25 & 61.19 & 2.62 \\
  0.33 & 10 & 20 & 63.15 & 2.65 \\
  0.5 & 15 & 15 & 63.44 & 2.68 \\
  0.67 & 20 & 10 & 63.48 & 2.69 \\
  0.83 & 25 & 5 & 63.69 & 2.71 \\
  1 & 30 & 0 & 64.12 & 2.75 \\
  \bottomrule
  \end{tabular}
  \label{tbl:mixed_setting}
\end{table}


We compare the average velocity and comfort for all vehicles under different CAV ratios. The velocity and comfort are averaged over all the 40,000 timesteps used in the simulation. The result in Table~\ref{tbl:mixed_setting} 
shows the average velocity and comfort of the entire mixed traffic. From the result using Shapley value reward reallocation, we can see the average velocity and comfort increase when the CAV ratio gets higher. Comparing the pure CAVs' case and the pure human-driven vehicles' case, the average velocity improves 10\% and the average comfort improves 8\%. The results also give us insights that the penetration of the CAVs can improve traffic in the future. Comparing with COMA, our algorithm gets a higher system's total reward in terms of both average velocity and comfort.

\section{conclusion}
In this work, we propose a Shapley value-based method to reallocate the system's total reward to each agent, to motivate cooperation among agents, for multi-agent systems such as connected autonomous vehicles (CAVs). We prove that the proposed Shapley value-based reward reallocation locates in the core of the convex game. Hence, the reward reallocation mechanism is stable and efficient, and each individual agent should stay in the cooperating coalition to receive more rewards. We design a cooperative policy learning algorithm which is centralized training and distributed execution. In experiments, we show the improvement of the system's total reward for CAV systems using the proposed algorithm. We also validate the effectiveness of our method in a mixed traffic scenario.
In the future, we will leverage the results to other networked CPS to better understand the benefits of communication-based cooperation.

\bibliographystyle{IEEEtran}
{ 
\bibliography{shapley}

\begin{thebibliography}{10}
\providecommand{\url}[1]{#1}
\csname url@samestyle\endcsname
\providecommand{\newblock}{\relax}
\providecommand{\bibinfo}[2]{#2}
\providecommand{\BIBentrySTDinterwordspacing}{\spaceskip=0pt\relax}
\providecommand{\BIBentryALTinterwordstretchfactor}{4}
\providecommand{\BIBentryALTinterwordspacing}{\spaceskip=\fontdimen2\font plus
\BIBentryALTinterwordstretchfactor\fontdimen3\font minus
  \fontdimen4\font\relax}
\providecommand{\BIBforeignlanguage}[2]{{%
\expandafter\ifx\csname l@#1\endcsname\relax
\typeout{** WARNING: IEEEtran.bst: No hyphenation pattern has been}%
\typeout{** loaded for the language `#1'. Using the pattern for}%
\typeout{** the default language instead.}%
\else
\language=\csname l@#1\endcsname
\fi
#2}}
\providecommand{\BIBdecl}{\relax}
\BIBdecl

\bibitem{DSRC_standard}
J.~B. Kenney, ``Dedicated short-range communications (dsrc) standards in the
  united states,'' \emph{Proceedings of the IEEE}, vol.~99, no.~7, pp.
  1162--1182, July 2011.

\bibitem{Coordinate_CAV}
J.~{Rios-Torres} and A.~A. {Malikopoulos}, ``A survey on the coordination of
  connected and automated vehicles at intersections and merging at highway
  on-ramps,'' \emph{IEEE Trans. Intell. Transp. Syst.}, vol.~18, no.~5, pp.
  1066--1077, May 2017.

\bibitem{CV_intersection}
J.~{Lee} and B.~{Park}, ``Development and evaluation of a cooperative vehicle
  intersection control algorithm under the connected vehicles environment,''
  \emph{IEEE Trans. Intell. Transp. Syst.}, vol.~13, no.~1, pp. 81--90, March
  2012.

\bibitem{ort2018autonomous}
T.~Ort, L.~Paull, and D.~Rus, ``Autonomous vehicle navigation in rural
  environments without detailed prior maps,'' in \emph{ICRA 2018}.\hskip 1em
  plus 0.5em minus 0.4em\relax IEEE, 2018, pp. 2040--2047.

\bibitem{cacc_platoon}
J.~Ploeg, D.~P. Shukla, N.~van~de Wouw, and H.~Nijmeijer, ``Controller
  synthesis for string stability of vehicle platoons,'' \emph{IEEE Trans.
  Intell. Transp. Syst.}, vol.~15, no.~2, pp. 854--865, April 2014.

\bibitem{savefuel_Karl}
K.~Y. Liang, J.~Mårtensson, and K.~H. Johansson, ``Heavy-duty vehicle platoon
  formation for fuel efficiency,'' \emph{IEEE Trans. Intell. Transp. Syst.},
  vol.~17, no.~4, pp. 1051--1061, April 2016.

\bibitem{V2Vbenefit_platoon}
S.~{Darbha}, S.~{Konduri}, and P.~R. {Pagilla}, ``Benefits of v2v communication
  for autonomous and connected vehicles,'' \emph{IEEE Trans. Intell. Transp.
  Syst.}, vol.~20, no.~5, pp. 1954--1963, 2019.

\bibitem{waymo_rss19}
M.~Bansal, A.~Krizhevsky, and A.~S. Ogale, ``Chauffeurnet: Learning to drive by
  imitating the best and synthesizing the worst,'' \emph{CoRR}, vol.
  abs/1812.03079, 2018.

\bibitem{RobustEndtoEnd_icra20}
A.~Amini and I.~Gilitschenski, ``Learning robust control policies for
  end-to-end autonomous driving from data-driven simulation,'' \emph{IEEE
  Robot. Autom. Lett.}, vol.~5, pp. 1--1, 01 2020.

\bibitem{PredictPolicyLearn_LeCun}
M.~Henaff, Y.~LeCun, and A.~Canziani, ``\BIBforeignlanguage{English
  (US)}{Model-predictive policy learning with uncertainty regularization for
  driving in dense traffic},'' in \emph{\BIBforeignlanguage{English (US)}{7th
  ICLR}}, 1 2019.

\bibitem{UCBFlow_iccps19}
K.~Jang and E.~Vinitsky, ``Simulation to scaled city: Zero-shot policy transfer
  for traffic control via autonomous vehicles,'' in \emph{10th ICCPS}, ser.
  ICCPS ’19, 2019, p. 291–300.

\bibitem{Shapley}
L.~Shapley, ``A value for n-person games,'' \emph{Contributions to the Theory
  of Games}, vol.~2, no.~28, pp. 301--317, 1953.

\bibitem{ShapleyQ_aaai20}
J.~Wang, Y.~Zhang, T.-K. Kim, and Y.~Gu, ``Shapley q-value: A local reward
  approach to solve global reward games,'' in \emph{AAAI}, 2020.

\bibitem{zhang2019multi}
K.~Zhang and Z.~Yang, ``Multi-agent reinforcement learning: A selective
  overview of theories and algorithms,'' \emph{arXiv:1911.10635}, 2019.

\bibitem{sunehag2018value}
P.~Sunehag and G.~Lever, ``Value-decomposition networks for cooperative
  multi-agent learning based on team reward,'' in \emph{AAMAS}, 2018, pp.
  2085--2087.

\bibitem{rashid_nips20}
T.~Rashid, G.~Farquhar, B.~Peng, and S.~Whiteson, ``Weighted qmix: Expanding
  monotonic value function factorisation for deep multi-agent reinforcement
  learning,'' in \emph{NeurIPS}, December 2020.

\bibitem{foerster2017counterfactual}
J.~Foerster and G.~Farquhar, ``Counterfactual multi-agent policy gradients,''
  in \emph{AAAI}, 2018.

\bibitem{NEURIPS2020_8977ecbb}
M.~Zhou, Z.~Liu, P.~Sui, Y.~Li, and Y.~Y. Chung, ``Learning implicit credit
  assignment for cooperative multi-agent reinforcement learning,'' in
  \emph{NeurIPS}, vol.~33, 2020, pp. 11\,853--11\,864.

\bibitem{lowe2017multi}
R.~Lowe and Y.~I. Wu, ``Multi-agent actor-critic for mixed
  cooperative-competitive environments,'' in \emph{NeurIPS}, 2017, pp.
  6379--6390.

\bibitem{iqbal2019actor}
S.~Iqbal and F.~Sha, ``Actor-attention-critic for multi-agent reinforcement
  learning,'' in \emph{ICML}, 2019, pp. 2961--2970.

\bibitem{learnComm_nips16}
S.~Sukhbaatar, A.~Szlam, and R.~Fergus, ``Learning multiagent communication
  with backpropagation,'' in \emph{NeurIPS}.\hskip 1em plus 0.5em minus
  0.4em\relax Red Hook, NY, USA: Curran Associates Inc., 2016, p. 2252–2260.

\bibitem{learnComm_iclr19}
D.~Kim, S.~Moon, D.~Hostallero, W.~J. Kang, T.~Lee, K.~Son, and Y.~Yi,
  ``Learning to schedule communication in multi-agent reinforcement learning,''
  in \emph{ICLR}, 2019.

\bibitem{communication_POMDP}
C.~Zhang and V.~Lesser, ``Coordinated multi-agent reinforcement learning in
  networked distributed pomdps,'' in \emph{AAAI}, 2011.

\bibitem{Comm_MAplan}
F.~Wu, S.~Zilberstein, and X.~Chen, ``Online planning for multi-agent systems
  with bounded communication,'' \emph{Artificial Intelligence}, vol. 175,
  no.~2, pp. 487 -- 511, 2011.

\bibitem{learnComm_MARL}
J.~Foerster, I.~A. Assael, N.~de~Freitas, and S.~Whiteson, ``Learning to
  communicate with deep multi-agent reinforcement learning,'' in
  \emph{NeurIPS}, D.~D. Lee, M.~Sugiyama, U.~V. Luxburg, I.~Guyon, and
  R.~Garnett, Eds.\hskip 1em plus 0.5em minus 0.4em\relax Curran Associates,
  Inc., 2016, pp. 2137--2145.

\bibitem{attentionComm_nips18}
J.~Jiang and Z.~Lu, ``Learning attentional communication for multi-agent
  cooperation,'' in \emph{NeurIPS}.\hskip 1em plus 0.5em minus 0.4em\relax Red
  Hook, NY, USA: Curran Associates Inc., 2018, p. 7265–7275.

\bibitem{okada96}
A.~Okada, ``A noncooperative coalitional bargaining game with random
  proposers,'' \emph{Games and Economic Behavior}, vol.~16, no.~1, pp. 97--108,
  1996.

\bibitem{madoc93}
B.~Moldovanu and E.~Winter, ``\BIBforeignlanguage{German}{Order independent
  equilibria},'' \emph{\BIBforeignlanguage{German}{Games and Economic
  Behavior}}, pp. 21--35, 1995.

\bibitem{vakilinia2019fair}
I.~Vakilinia and S.~Sengupta, ``Fair and private rewarding in a coalitional
  game of cybersecurity information sharing,'' \emph{IET Information Security},
  vol.~13, no.~6, pp. 530--540, 2019.

\bibitem{sandholm1999distributed}
T.~W. Sandholm, ``Distributed rational decision making,'' \emph{Multiagent
  systems: a modern approach to distributed artificial intelligence}, pp.
  201--258, 1999.

\bibitem{taase04}
S.~Kraus, O.~Shehory, and G.~Taase, ``The advantages of compromising in
  coalition formation with incomplete information,'' in \emph{AAMAS}, vol.~4,
  2004, pp. 588--595.

\bibitem{chalkiadakis04}
G.~Chalkiadakis and C.~Boutilier, ``Bayesian reinforcement learning for
  coalition formation under uncertainty,'' in \emph{AAMAS}, 2004, pp.
  1090--1097.

\bibitem{Taywade21}
K.~Taywade, ``Multi-agent reinforcement learning for decentralized coalition
  formation games,'' in \emph{AAAI}, vol.~35, no.~18, 2021, pp.
  15\,738--15\,739.

\bibitem{DataShapley_ml19}
A.~Ghorbani and J.~Zou, ``Data shapley: Equitable valuation of data for machine
  learning,'' in \emph{ICML}.\hskip 1em plus 0.5em minus 0.4em\relax PMLR,
  2019, pp. 2242--2251.

\bibitem{ShapleyFeature_nips17}
S.~M. Lundberg and S.-I. Lee, ``A unified approach to interpreting model
  predictions,'' \emph{NeurIPS}, vol.~30, 2017.

\bibitem{computeShapley_13}
T.~P. Michalak, K.~V. Aadithya, P.~L. Szczepanski, B.~Ravindran, and N.~R.
  Jennings, ``Efficient computation of the shapley value for game-theoretic
  network centrality,'' \emph{J. Artif. Int. Res.}, vol.~46, no.~1, p.
  607–650, Jan. 2013.

\bibitem{chalkiadakis2011computational}
G.~Chalkiadakis, E.~Elkind, and M.~Wooldridge, ``Computational aspects of
  cooperative game theory,'' \emph{Synthesis Lectures on Artificial
  Intelligence and Machine Learning}, vol.~5, no.~6, pp. 1--168, 2011.

\bibitem{littlechild1973simple}
S.~C. Littlechild and G.~Owen, ``A simple expression for the shapley value in a
  special case,'' \emph{Management Science}, vol.~20, no.~3, pp. 370--372,
  1973.

\bibitem{Dosovitskiy17}
A.~Dosovitskiy and G.~Ros, ``{CARLA}: {An} open urban driving simulator,'' in
  \emph{CoRL}, 2017, pp. 1--16.

\bibitem{shanglin2021vision}
S.~Zhou, M.~Xie, Y.~Jin, F.~Miao, and C.~Ding, ``An end-to-end multi-task
  object detection using embedded gpu in autonomous driving,'' in \emph{22nd
  International Symposium on Quality Electronic Design}.\hskip 1em plus 0.5em
  minus 0.4em\relax ISQED, 2021, pp. 122--128.

\bibitem{qin2020ultra}
Z.~Qin, H.~Wang, and X.~Li, ``Ultra fast structure-aware deep lane detection,''
  \emph{arXiv:2004.11757}, 2020.

\bibitem{he_deep_2016}
K.~He, X.~Zhang, S.~Ren, and J.~Sun, ``Deep residual learning for image
  recognition,'' in \emph{CVPR}, 2016, pp. 770--778.

\bibitem{lang2019pointpillars}
A.~H. Lang, S.~Vora, H.~Caesar, L.~Zhou, J.~Yang, and O.~Beijbom,
  ``Pointpillars: Fast encoders for object detection from point clouds,'' in
  \emph{CVPR}, 2019, pp. 12\,697--12\,705.

\bibitem{li2020robust}
S.~Li and O.~Bastani, ``Robust model predictive shielding for safe
  reinforcement learning with stochastic dynamics,'' in \emph{ICRA}, 2020, pp.
  7166--7172.

\bibitem{zhang2019mamps}
W.~Zhang, O.~Bastani, and V.~Kumar, ``Mamps: Safe multi-agent reinforcement
  learning via model predictive shielding,'' \emph{arXiv:1910.12639}, 2019.

\bibitem{li2019robust}
S.~Li, Y.~Wu, X.~Cui, H.~Dong, F.~Fang, and S.~Russell, ``Robust multi-agent
  reinforcement learning via minimax deep deterministic policy gradient,'' in
  \emph{AAAI}, vol.~33, no.~01, 2019, pp. 4213--4220.

\bibitem{foerster2018counterfactual}
J.~Foerster, G.~Farquhar, T.~Afouras, N.~Nardelli, and S.~Whiteson,
  ``Counterfactual multi-agent policy gradients,'' in \emph{AAAI}, vol.~32,
  no.~1, 2018.

\end{thebibliography}
}

\end{document}